\documentclass[10pt,tightenlines,showpacs,letterpaper,aps,prd,longbibliography,twocolumn,nofootinbib,superscriptaddress]{revtex4-2}
\usepackage{rotating}
\usepackage{floatpag}
\rotfloatpagestyle{empty}

\usepackage{dsfont}
\usepackage{amsmath,amsthm,amssymb}
\usepackage{xspace,enumerate,epsfig}
\usepackage{graphicx}
\graphicspath{{.}{./figures/}}
\usepackage{marvosym}

\let\olddagger\dagger
\renewcommand{\dagger}{\ensuremath{\olddagger}\xspace}

\theoremstyle{definition}
\newtheorem{theorem}{Theorem}
\newtheorem{corollary}[theorem]{Corollary}
\newtheorem{lemma}[]{Lemma}

\newtheorem{definition}[]{Definition}

\newtheorem{example*}[theorem]{Example*}
\newtheorem{examples*}[theorem]{Examples*}

\newtheorem{remark*}[theorem]{Remark*}

\theoremstyle{plain}

\usepackage{color}
\def\bR{\begin{color}{red}}
\def\bB{\begin{color}{blue}}
\def\bM{\begin{color}{magenta}}
\def\bC{\begin{color}{cyan}}
\def\bW{\begin{color}{white}}
\def\bBl{\begin{color}{black}}
\def\bG{\begin{color}{green}}
\def\bY{\begin{color}{yellow}}
\def\e{\end{color}\xspace}
\newcommand{\bit}{\begin{itemize}}
\newcommand{\eit}{\end{itemize}\par\noindent}
\newcommand{\ben}{\begin{enumerate}}
\newcommand{\een}{\end{enumerate}\par\noindent}
\newcommand{\beq}{\begin{equation}}
\newcommand{\eeq}{\end{equation}\par\noindent}
\newcommand{\beqa}{\begin{eqnarray*}}
\newcommand{\eeqa}{\end{eqnarray*}\par\noindent}
\newcommand{\beqn}{\begin{eqnarray}}
\newcommand{\eeqn}{\end{eqnarray}\par\noindent}

\def\jR{\begin{color}{DarkRed}}
\def\jB{\begin{color}{MidnightBlue}}
\def\jM{\begin{color}{magenta}}
\def\jC{\begin{color}{cyan}}
\def\jW{\begin{color}{white}}
\def\jBl{\begin{color}{black}}
\def\jG{\begin{color}{green}}
\def\jY{\begin{color}{yellow}}

\def\cR{\begin{color}{Crimson}}
\def\cB{\begin{color}{cyan}}
\def\cM{\begin{color}{magenta}}
\def\cC{\begin{color}{cyan}}
\def\cW{\begin{color}{white}}
\def\cBl{\begin{color}{black}}
\def\cG{\begin{color}{green}}
\def\cY{\begin{color}{yellow}}

\usepackage{amssymb}
\usepackage{epstopdf}
\usepackage{amsmath,amsthm}	
\usepackage{bm}
\usepackage{etex}
\usepackage{hyperref}

\newcounter{procedurecount}
\newcommand{\protocoltitle}{}

\begin{document}

        \title{Operational Discriminability: From Noncontextuality Bounds to Bell Correlations}

\author{Seyed Arash Ghoreishi\thanks{arash.ghoreishi@savba.sk}}
 
 \affiliation{Institute of Physics, Slovak Academy of Sciences, Dúbravská Cesta 9, 84511 Bratislava, Slovakia}

\begin{abstract}
We investigate discriminability from an operational and contextuality-oriented perspective using a two-copy comparison game based on SWAP-type measurements. The resulting score $D_{\mathrm{op}}$ provides an experimentally accessible notion of distinguishability that does not rely on a minimum-error discrimination task. We first examine whether this discriminability game can directly witness preparation contextuality. Within a preparation-noncontextual ontological model, we derive a direct upper bound on the game score under a SWAP-like comparison rule and a sharp single-copy test, and show that this bound is saturated in the natural qubit realization. Thus, the direct game alone does not provide a contextuality witness in that regime. We then consider a Bell-coupled scenario in which two-copy comparison measurements are applied to Bob's conditional preparations. This yields a state-dependent upper bound on the CHSH value in terms of operational separation parameters, and hence in terms of the distinguishability of the conditional states. Our results establish a quantitative link between operational discriminability
and the strength of nonclassical correlations, showing that discriminability can act as an operational resource constraining Bell-type nonclassical
correlations.
\end{abstract}

\maketitle
\section{Introduction}

The ability to distinguish physical states is a central concept in quantum information theory. 
A wide range of protocols, including quantum communication, cryptography, and metrology, rely on the fact that nonorthogonal quantum states cannot be perfectly discriminated \cite{helstrom1969quantum,Bennett2014,Gisin2002,RENNER2008,Scarani2009,Ghoreishi2025a,PARIS2009,Liu2019,Montenegro2025}. 
Consequently, the problem of quantum state discrimination has been extensively studied. Based on the figure of merit, one can consider different approaches, including the framework of minimum–error state discrimination (MESD), where the goal is to maximize the probability of correctly identifying an unknown state drawn from a known ensemble, and unambiguous state discrimination (USD) where the goal is to identify the state without error, at the price of a nonzero probability of obtaining an inconclusive outcome for nonorthogonal states. Based on the problem at hand, a combination of these figures of merit can be considered. See the references \cite{Chefles2000,Barnett2009,Bae2015} for a review of different approaches.

Beyond its information–theoretic significance, state discrimination also plays an important role in quantum foundations. 
In particular, the impossibility of perfectly distinguishing nonorthogonal states has been linked to fundamental structural features of quantum theory, such as contextuality and the epistemic interpretation of quantum states \cite{Spekkens2005,Schmid2018a,Lostaglio2020,Budroni2022,Araujo2013,Borges2014}. 
Recent work has shown that the statistics of state discrimination tasks can be used to derive experimentally testable inequalities that are satisfied by all preparation–noncontextual ontological models but violated by quantum theory \cite{Schmid2018,Shin2021,Flatt2022,Mukherjee2022,RochiCarceller2024,Namkung2026,Flatt2026}.

While the standard state discrimination problem provides a natural operational framework for studying distinguishability, it is not the only way to quantify how well two states can be distinguished. In fact the number of problems that can be solved analytically is limited. As an example of analytically solved problems, we can refer to the Helstrom formula for minimum error discrimination \cite{helstrom1969quantum}, minimum-error discrimination of qubits \cite{Bae2013,Ha2013,Weir2017,Ha2023,rouhbakhsh2023geometric,Lue2026}, and some special cases such as \cite{Ban1997,Andersson2002,Chou2004,Ghoreishi2021,Lue2023}. For USD, beyond the case of two pure states \cite{Ivanovic1987,Dieks1988,Peres1988,Jaeger1995}, one can refer to the references in \cite{Chefles1998,Jafarizadeh2008,Sugimoto2010,Bergou2012,Ha2015}. 
Due to the difficulty of solving a general state discrimination problem, in many contexts, it is desirable to characterize distinguishability through alternative operational quantities that are independent of a specific (standard) discrimination strategy. 
One such quantity is the \emph{discriminability} introduced in Ref.~\cite{ghoreishi2019parametrization}, which is defined in terms of the fidelity between a state constructed from the ensemble and a classical prior state.  For two pure states, this quantity admits a closed analytic expression and coincides with the optimal discrimination success probability in the equal–prior case.  However, the operational meaning of this quantity has not yet been fully explored.

The goal of the present work is to provide an operational interpretation of discriminability and to investigate its connection to preparation contextuality. 
Specifically, we introduce an operational game that quantifies distinguishability using only experimentally accessible statistics obtained from two-copy comparison measurements. 
The score of this game defines an operational quantity $D_{\mathrm{op}}$ that can be estimated directly from measurement frequencies. 
In the qubit realization, we show that this operational score coincides with the discriminability defined in Ref.~\cite{ghoreishi2019parametrization}, thereby providing a direct experimental interpretation of this quantity.

Using this operational formulation, we then investigate the constraints imposed by preparation noncontextuality. 
We derive a direct noncontextual upper bound on the discriminability game under a SWAP-like comparison rule and a sharp single-copy test. 
We show, however, that in the natural qubit realization this bound is saturated by quantum theory. 
Thus, although the discriminability game admits a well-defined noncontextual characterization, it does not by itself yield a contextuality witness in this regime.

This observation motivates a different perspective. 
Rather than asking whether discriminability directly reveals contextuality, we investigate whether it can act as an operational resource controlling the strength of nonclassical correlations. 
To this end, we consider a Bell-coupled scenario in which two-copy comparison measurements are applied to Bob’s conditional preparations. 
We derive a state-dependent upper bound on the CHSH value in terms of operationally measurable separation parameters, and hence in terms of the distinguishability of the conditional states. 
This establishes a quantitative link between discriminability and Bell-type nonclassical behavior.

The paper is organized as follows. In Sec.~\ref{sec:operational}, we introduce the operational framework and review the notion of preparation noncontextuality in prepare--measure scenarios. In Sec.~\ref{sec:Dop_game} we define the operational discriminability game and show how it reproduces the fidelity-based discriminability in the qubit realization. In Sec.~\ref{sec:direct_bound}, we analyze the constraints imposed by preparation noncontextuality and derive a direct upper bound on the game, showing that it is saturated in the natural quantum realization. In Sec.~\ref{sec:CHSH_route} we move to a Bell-coupled scenario and derive a state-dependent CHSH bound in terms of operational separation parameters, establishing a connection between discriminability and nonclassical correlations. Finally, Sec.~\ref{sec:conclusion} concludes with a summary and outlook.

\section{Operational framework and preparation noncontextuality}
\label{sec:operational}

\subsection{Prepare--measure experiments}
Consider a finite prepare--and-measure scenario, an operational theory then specifies a set of
preparation procedures $\mathcal{P}$, a set of measurement settings $\mathcal{M}$, and conditional
probabilities 
\begin{equation}
p(k|M,P),\qquad P\in\mathcal{P},\ M\in\mathcal{M},
\end{equation}
for each outcome $k$ of measurement $M$ when performed on a system prepared by $P$.

Two preparations $P,P'\in\mathcal{P}$ are said to be \emph{operationally equivalent}, denoted
\begin{equation}
P\simeq P',
\end{equation}
if they cannot be distinguished by any measurement in the scenario, i.e.,
\begin{equation}
p(k|M,P)=p(k|M,P')\quad \forall\,M\in\mathcal{M},\ \forall\,k.
\label{eq:op_equiv_def}
\end{equation}
Operational equivalence relations among preparations are the key ingredients by which noncontextuality becomes
testable.

\subsection{Ontological models}
An \emph{ontological model} supplements the operational description with an ontic state space $\Lambda$, where
each $\lambda\in\Lambda$ represents a complete specification of the physical state of the system.
Each preparation $P$ is represented by a normalized probability distribution $\mu_P(\lambda)$ over $\Lambda$,
and each measurement outcome $k$ of measurement $M$ is represented by a response function
$\xi_{k|M}(\lambda)\in[0,1]$ satisfying $\sum_k\xi_{k|M}(\lambda)=1$ for all $\lambda$.
The operational probabilities are reproduced as
\begin{equation}
p(k|M,P)=\int_{\Lambda} d\lambda\ \xi_{k|M}(\lambda)\,\mu_P(\lambda).
\label{eq:ont_rep}
\end{equation}
Preparation noncontextuality is then the assumption that operationally equivalent preparations are represented by
the same epistemic state:
\begin{equation}
P\simeq P' \ \Rightarrow\ \mu_P(\lambda)=\mu_{P'}(\lambda)\quad \forall\,\lambda\in\Lambda.
\label{eq:pnc_def}
\end{equation}
In particular, if two mixtures of preparations are operationally equivalent, then preparation noncontextuality
imposes a nontrivial constraint on the corresponding mixtures of epistemic states.
For instance, given preparations $P$ and $P'$, define the 50--50 mixture
\begin{equation}
P_{(P+P')/2} := \tfrac12 P + \tfrac12 P',
\end{equation}
which means that in each run, choose $P$ or $P'$ uniformly at random and discard the label.
Operationally, this implies
\begin{equation}
p(k|M,P_{(P+P')/2}) = \tfrac12 p(k|M,P) + \tfrac12 p(k|M,P')\quad \forall\,M,k.
\end{equation}
In any ontological model, mixtures are represented by the corresponding convex mixtures of epistemic states:
\begin{equation}
\mu_{(P+P')/2}(\lambda)=\tfrac12 \mu_P(\lambda)+\tfrac12\mu_{P'}(\lambda).
\label{eq:mixture_mu}
\end{equation}

The primary operational equivalence used throughout this work is the existence of two distinct decompositions of
the same mixed preparation.  We assume that there exist four preparation procedures
\begin{equation}
P_T,\quad P_{\tilde T},\quad P_\eta,\quad P_{\tilde\eta}
\end{equation}
such that their 50--50 mixtures are operationally equivalent:
\begin{equation}
\tfrac12 P_T+\tfrac12 P_{\tilde T}\ \simeq\ \tfrac12 P_\eta+\tfrac12 P_{\tilde\eta}.
\label{eq:key_op_equiv}
\end{equation}
In quantum realizations on a qubit, this equivalence corresponds to two different decompositions of the maximally
mixed state $I/2$; however, Eq.~\eqref{eq:key_op_equiv} is stated purely operationally and is the only structural
assumption needed for our noncontextuality analysis.

By preparation noncontextuality \eqref{eq:pnc_def} and mixture representation \eqref{eq:mixture_mu}, the operational
equivalence \eqref{eq:key_op_equiv} implies the ontic constraint
\begin{equation}
\tfrac12 \mu_T(\lambda)+\tfrac12\mu_{\tilde T}(\lambda)
=
\tfrac12 \mu_\eta(\lambda)+\tfrac12\mu_{\tilde\eta}(\lambda)
\quad \forall\,\lambda\in\Lambda.
\label{eq:key_mu_constraint}
\end{equation}
This is the preparation-noncontextuality ``lever'' that will yield nontrivial inequalities.

\section{Operational discriminability game}
\label{sec:Dop_game}
\subsection{Fidelity-based discriminability}
\label{sec:D_GAS}

We begin by recalling the fidelity-based notion of discriminability introduced in~\cite{ghoreishi2019parametrization}. Consider an ensemble of two pure states
$\{(\eta_1,|\psi\rangle),(\eta_2,|\phi\rangle)\}$ with priors $\eta_1,\eta_2$ (where $\eta_1+\eta_2=1$) and
overlap
\begin{equation}
\gamma_{12}:=\langle\psi|\phi\rangle.
\end{equation}
Choose an orthonormal basis $\{|e_1\rangle,|e_2\rangle\}$ such that $|e_1\rangle:=|\psi\rangle$ and
\begin{equation}
|\phi\rangle = \gamma_{12}\,|e_1\rangle + \sqrt{1-|\gamma_{12}|^2}\,|e_2\rangle,
\label{eq:phi_decomp}
\end{equation}

In \cite{ghoreishi2019parametrization}, they associate to this ensemble a $2\times 2$ density operator $\rho_T$ defined in this basis by
\begin{equation}
\rho_T
=
\begin{pmatrix}
\eta_1+\eta_2|\gamma_{12}|^2
&
\eta_2\,\gamma_{12}\sqrt{1-|\gamma_{12}|^2}
\\[4pt]
\eta_2\,\gamma_{12}^*\sqrt{1-|\gamma_{12}|^2}
&
\eta_2\big(1-|\gamma_{12}|^2\big)
\end{pmatrix}.
\label{eq:rhoT_explicit}
\end{equation}
Equivalently, $\rho_T$ is the Gram-type state built from the priors and inner products of the ensemble.
They also define the diagonal \emph{prior state}
\begin{equation}
\eta:=\mathrm{diag}(\eta_1,\eta_2),
\label{eq:eta_diag}
\end{equation}
and allow for relabelings of the priors via permutations $p$ of the diagonal entries, denoting the resulting
state by $\eta_p$ (for two outcomes, $p$ is either the identity or the swap).

The discriminability is then defined as the Uhlmann fidelity between $\rho_T$ and the best relabeling of the
prior state:
\begin{equation}
D(\eta) := \max_{p}\,F(\rho_T,\eta_p),
\label{eq:D_GAS_def}
\end{equation}
where the fidelity is
\begin{equation}
F(\rho,\sigma):=\left(\mathrm{Tr}\sqrt{\sqrt{\rho}\,\sigma\,\sqrt{\rho}}\right)^2.
\end{equation}

For two pure states, a closed-form expression for $D(\eta)$ in terms of the priors and the overlap
parameter can be derived
\begin{align}
D(\eta_1,\eta_2;|\gamma_{12}|^2)
&=
\eta_1^2+\eta_2^2
+2\eta_1\eta_2\sqrt{1-|\gamma_{12}|^2} \nonumber \\
&+|\gamma_{12}|^2\big(\eta_1\eta_2-\eta_{\min}^2\big),
\label{eq:D_closed_form}
\end{align}
where $\eta_{\min}:=\min\{\eta_1,\eta_2\}$.  In the special case of equal priors $\eta_1=\eta_2=\tfrac12$,
this discriminability coincides with the optimal success probability for minimum-error discrimination of the
two states; for unequal priors it defines a distinct, fidelity-based measure of distinguishability.

In the remainder of this section, we introduce an operational game, defined entirely in terms of observed
probabilities in a fixed laboratory experiment, whose score reproduces $D(\eta)$ in the qubit realization.
\subsection{Operational discriminability score}
We introduce a fixed binary-outcome measurement on two systems, denoted $M_{\mathrm{swap}}$, with outcomes
labeled \textsf{pass} and \textsf{fail}.Operationally, $M_{\mathrm{swap}}$ is simply a laboratory procedure that takes two prepared systems and outputs a bit.
(In a qubit quantum realization, this can be implemented as a SWAP test.) We define two primitive experiments, each producing an empirical \textsf{pass} frequency. In the first experiment, denoted \(G_{\mathrm{mix}}^{(p)}\), each run proceeds as follows:
\begin{enumerate}
\item Prepare system $A$ using $P_T$.
\item Prepare system $B$ using $P_\eta^{(p)}$, where $p$ is a fixed labeling (permutation) of the two prior outcomes.
\item Apply $M_{\mathrm{swap}}$ to the pair $(A,B)$ and record whether the outcome is \textsf{pass}.
\end{enumerate}
Let
\begin{equation}
p_{\mathrm{mix}}^{(p)} := p(\textsf{pass}\mid P_T,P_\eta^{(p)};M_{\mathrm{swap}})
\label{eq:pmix}
\end{equation}
denote the observed \textsf{pass} probability.

In the second experiment, denoted \(G_{\mathrm{pur}}\), each run proceeds as follows:
\begin{enumerate}
\item Prepare system $A$ using $P_T$.
\item Prepare system $B$ independently using $P_T$.
\item Apply $M_{\mathrm{swap}}$ to the pair $(A,B)$ and record \textsf{pass}.
\end{enumerate}
Let
\begin{equation}
p_{\mathrm{pur}} := p(\textsf{pass}\mid P_T,P_T;M_{\mathrm{swap}})
\label{eq:ppur}
\end{equation}
denote the observed \textsf{pass} probability.
Now assume that the prior probabilities $(\eta_1,\eta_2)$ are known a priori.
(For the two-outcome case, there are two possible labelings $p\in S_2$.)
We define the \emph{operational discriminability} as a scored game whose score is a function of the
experimentally observed \textsf{pass} rates \eqref{eq:pmix}--\eqref{eq:ppur}:

\begin{definition}[Operational discriminability $D_{\mathrm{op}}(\eta)$]
\label{def:Dop}
Define
\begin{equation}
D_{\mathrm{op}}(\eta)
:= \max_{p\in S_2}
\Big[
(2p_{\mathrm{mix}}^{(p)}-1)
\;+\;
2\sqrt{\eta_1\eta_2(1-p_{\mathrm{pur}})}
\Big].
\label{eq:Dop}
\end{equation}
\end{definition}

Since this definition refers only to preparation procedures, and a fixed measurement, and observed probabilities, it is purely operational.

Let us stress that no state discrimination (or any guessing task) is invoked in the definition. However, in a qubit quantum realization where $M_{\mathrm{swap}}$ is the swap test and $P_\eta^{(p)}$
prepares the diagonal prior state, one can show that
\begin{equation}
2p_{\mathrm{mix}}^{(p)}-1=\mathrm{Tr}(\rho_T\eta_p),\qquad
2p_{\mathrm{pur}}-1=\mathrm{Tr}(\rho_T^2),
\end{equation}
and for qubits
\[
F(\rho,\sigma)=\mathrm{Tr}(\rho\sigma)+2\sqrt{\det\rho\,\det\sigma},
\]
leading to $D_{\mathrm{op}}(\eta)=\max_p F(\rho_T,\eta_p)$.
This interpretation is not required for operational validity but is useful for connecting to standard quantities.
%

%

\section{Direct preparation-noncontextual bound for the discriminability game}
\label{sec:direct_bound}

In Secs.~\ref{sec:operational} and \ref{sec:Dop_game}, we introduced the operational discriminability game
and the preparation-noncontextuality framework in which to analyze it.
A natural question is whether the score $D_{\mathrm{op}}(\eta)$ itself can serve as a direct witness of
preparation contextuality.  In other words, does preparation noncontextuality imply a nontrivial upper bound on
the achievable value of the discriminability game?

In this section we address this question in the simplest nontrivial ontological model.
We begin by formulating the game directly at the ontological level.
We then derive a direct preparation-noncontextual bound under a natural SWAP-like restriction on the comparison
measurement and an ideal sharpness assumption for a single-copy test.
Finally, we show that the resulting bound is saturated by the natural qubit realization, so that the direct route
does not by itself yield a contextuality witness in this regime.
This negative result motivates the more powerful CHSH-coupled construction developed in the next section.

\subsection{Ontological formulation of the two-copy discriminability game}
\label{sec:direct_ontological}

We consider a finite ontological model with single-system ontic state space
\begin{equation}
\Lambda=\{1,\dots,n\}.
\end{equation}
Each preparation $P\in\{P_T,P_{\tilde T},P_\eta,P_{\tilde\eta}\}$ is represented by an epistemic state
\begin{equation}
\mu_P(i)\ge 0,\qquad \sum_{i=1}^{n}\mu_P(i)=1.
\label{eq:mu_norm_direct}
\end{equation}

The key preparation-noncontextuality assumption is the operational equivalence of the 50--50 mixtures
\begin{equation}
\tfrac12 P_T+\tfrac12 P_{\tilde T}\simeq \tfrac12 P_\eta+\tfrac12 P_{\tilde\eta},
\end{equation}
which, by preparation noncontextuality, implies
\begin{equation}
\mu_T(i)+\mu_{\tilde T}(i)=\mu_\eta(i)+\mu_{\tilde\eta}(i)
\qquad \forall i\in\Lambda.
\label{eq:pnc_direct_constraint}
\end{equation}

The comparison game uses a binary two-copy measurement
\[
M_{\mathrm{swap}}=\{\textsf{pass},\textsf{fail}\}
\]
applied to two independently prepared systems.
In an ontological model, this measurement is represented by a response function on pairs of ontic states,
\begin{equation}
\xi_{\mathrm{pass}}(i,j)\in[0,1].
\label{eq:xi_pair_general}
\end{equation}
If the comparison measurement is exchange-symmetric, then one may impose
\[
\xi_{\mathrm{pass}}(i,j)=\xi_{\mathrm{pass}}(j,i).
\]

The primitive probabilities appearing in the discriminability game are then
\begin{align}
p_{\mathrm{mix}}^{(p)}
&=
\sum_{i,j}\xi_{\mathrm{pass}}(i,j)\,\mu_T(i)\,\mu_{\eta}^{(p)}(j),
\label{eq:pmix_general_direct}
\\
p_{\mathrm{pur}}
&=
\sum_{i,j}\xi_{\mathrm{pass}}(i,j)\,\mu_T(i)\,\mu_T(j),
\label{eq:ppur_general_direct}
\end{align}
where $\mu_\eta^{(p)}$ denotes the relabeling the prior preparation used in the game.
The operational discriminability score is
\begin{equation}
D_{\mathrm{op}}(\eta)
=
\max_{p\in S_2}
\left[
(2p_{\mathrm{mix}}^{(p)}-1)
+
2\sqrt{\eta_1\eta_2{(1-p_{\mathrm{pur}})}} \right].
\label{eq:Dop_recalled_direct}
\end{equation}

To obtain a first analytic handle, we consider the smallest nontrivial ontic space,
\begin{equation}
\Lambda=\{0,1\}.
\end{equation}
Each preparation is then a Bernoulli distribution:
\begin{align}
\mu_T &= (t,1-t), &
\mu_{\tilde T} &= (\tilde t,1-\tilde t), \nonumber\\
\mu_\eta &= (e,1-e), &
\mu_{\tilde\eta} &= (\tilde e,1-\tilde e),
\end{align}
with $t,\tilde t,e,\tilde e\in[0,1].$ The preparation-noncontextuality constraint \eqref{eq:pnc_direct_constraint} is then reduced to
\begin{equation}
t+\tilde t=e+\tilde e.
\label{eq:n2_pnc}
\end{equation}

The exchange-symmetric two-copy response is a $2\times 2$ matrix
\begin{equation}
\Xi=
\begin{pmatrix}
a & b\\
b & c
\end{pmatrix},
\qquad a,b,c\in[0,1].
\label{eq:Xi_general}
\end{equation}
The corresponding primitive probabilities are
\begin{align}
p_{\mathrm{pur}}
&=
a\,t^2 + 2b\,t(1-t)+c\,(1-t)^2,
\label{eq:ppur_n2_general}
\\
p_{\mathrm{mix}}
&=
a\,te+b\,t(1-e)+b\,(1-t)e+c\,(1-t)(1-e).
\label{eq:pmix_n2_general}
\end{align}


However, the most general response matrix \eqref{eq:Xi_general} is too unconstrained to carry an immediate physical
interpretation as a similarity test. We therefore impose the natural SWAP-like restriction that identical ontic states pass with certainty, while
different ontic states pass with a common probability $q\in[0,1)$:
\begin{equation}
\Xi_q=
\begin{pmatrix}
1 & q\\
q & 1
\end{pmatrix}.
\label{eq:Xi_q}
\end{equation}
This is the simplest classical analogue of a comparison measurement that rewards ontic similarity:
equal ontic states are maximally similar, while different ontic states pass only with the reduced probability $q$.
Under \eqref{eq:Xi_q}, Eqs.~\eqref{eq:ppur_n2_general}--\eqref{eq:pmix_n2_general} simplify to
\begin{align}
p_{\mathrm{pur}}
&=
t^2+2q\,t(1-t)+(1-t)^2
=
1-2(1-q)t(1-t),
\label{eq:ppur_q}
\\
p_{\mathrm{mix}}
&=
te+q\,t(1-e)+q(1-t)e+(1-t)(1-e)
\nonumber\\
&=
1-(1-q)\big[t(1-e)+(1-t)e\big].
\label{eq:pmix_q}
\end{align}
It is convenient to define the disagreement probability
\begin{equation}
\delta:=t(1-e)+(1-t)e=t+e-2te,
\label{eq:delta_def}
\end{equation}
so that
\begin{equation}
p_{\mathrm{mix}}=1-(1-q)\delta.
\label{eq:pmix_delta}
\end{equation}

Substituting \eqref{eq:ppur_q} and \eqref{eq:pmix_delta} into \eqref{eq:Dop_recalled_direct} yields
\begin{equation}
D_{\mathrm{op}}
=
1-2(1-q)\delta
+
2\sqrt{2\eta_1\eta_2(1-q)t(1-t)}.
\label{eq:Dop_q_general}
\end{equation}
At first sight, one might hope that preparation noncontextuality together with the operational equivalence of
mixtures would already force a nontrivial upper bound on $D_{\mathrm{op}}$.
However, this is not the case, even under the natural SWAP-like restriction \eqref{eq:Xi_q}. Indeed, choosing
$t=e=1$ gives
$\delta=0,\qquad t(1-t)=0,$ and therefore
\[
p_{\mathrm{mix}}=1,\qquad p_{\mathrm{pur}}=1,
\]
which implies $D_{\mathrm{op}}=1.$
This choice is compatible with the preparation-noncontextuality condition \eqref{eq:n2_pnc}, since one can choose
the complementary preparations appropriately so that the mixture equivalence is satisfied.

Thus, even under the SWAP-like restriction, the preparation-noncontextual maximum remains trivial:
\[
D_{\mathrm{NC}}^{\mathrm{direct}}=1.
\]

This shows that preparation noncontextuality, together with the mixture equivalence alone, is insufficient to
generate a nontrivial discriminability witness.  One needs an additional operational ingredient certifying that
the preparations $P_T$ and $P_\eta$ are genuinely distinct. For this purpose,
we now introduce the minimal additional ingredient needed to obtain a nontrivial direct bound.
Assume that there exists a sharp single-copy test
\[
M_T=\{T,\tilde T\}
\]
such that
\begin{equation}
p(T|M_T,P_T)=1,\qquad p(T|M_T,P_{\tilde T})=0.
\label{eq:sharp_test}
\end{equation}
This ideal sharpness condition plays the same role as the perfect predictability assumptions in the noiseless
analysis of Schmid and Spekkens \cite{Schmid2018}.

In the $n=2$ ontological model, \eqref{eq:sharp_test} forces the supports of $\mu_T$ and $\mu_{\tilde T}$ to be
disjoint.  Since there are only two ontic states, this implies, up to relabeling,
\begin{equation}
\mu_T=(1,0),\qquad \mu_{\tilde T}=(0,1).
\label{eq:sharp_supports}
\end{equation}
The preparation-noncontextuality condition \eqref{eq:pnc_direct_constraint} then forces
\begin{equation}
\mu_\eta=(c,1-c),\qquad \mu_{\tilde\eta}=(1-c,c)
\label{eq:eta_conf}
\end{equation}
for some $c\in[0,1]$, where
\begin{equation}
c:=p(T|M_T,P_\eta)
\label{eq:c_confusability}
\end{equation}
is the confusability of $P_\eta$ relative to the sharp test for $P_T$.
Using \eqref{eq:sharp_supports} and \eqref{eq:eta_conf}, the primitive probabilities become
\begin{align}
p_{\mathrm{pur}}
&=
1,
\label{eq:ppur_sharp}
\\
p_{\mathrm{mix}}^{(\mathrm{id})}
&=
q+(1-q)c.
\label{eq:pmix_sharp}
\end{align}
Since $p_{\mathrm{pur}}=1$, the square-root term in \eqref{eq:Dop_recalled_direct} vanishes, and the score reduces
to a linear function of the mixed pass probability.
This yields the first nontrivial direct bound.

\begin{theorem}[Direct noncontextual upper bound for the discriminability game]
\label{thm:direct_bound}
Consider the $n=2$ ontological model with SWAP-like comparison response \eqref{eq:Xi_q}.
Assume further that there exists a sharp single-copy test satisfying \eqref{eq:sharp_test}, and that the four
preparations obey the preparation-noncontextual mixture equivalence \eqref{eq:key_op_equiv}.
Let
\[
c:=p(T|M_T,P_\eta)
\]
be the corresponding confusability.
Then the operational discriminability score satisfies
\begin{equation}
D_{\mathrm{op}}
\le
1-2(1-q)\min\{c,1-c\}.
\label{eq:direct_bound_full}
\end{equation}
In particular, after relabeling so that $c\le \tfrac12$,
\begin{equation}
D_{\mathrm{op}}\le 1-2(1-q)c.
\label{eq:direct_bound_relabel}
\end{equation}
\end{theorem}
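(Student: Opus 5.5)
The plan is a direct computation in the $n=2$ model using the reductions already established before the statement. First I will justify the support claim: by \eqref{eq:ont_rep}, the sharp test \eqref{eq:sharp_test} gives $\sum_i \xi_{T|M_T}(i)\mu_T(i)=1$ and $\sum_i\xi_{T|M_T}(i)\mu_{\tilde T}(i)=0$. Hence $\xi_{T|M_T}=1$ on the support of $\mu_T$ and $\xi_{T|M_T}=0$ on the support of $\mu_{\tilde T}$, so the two supports are disjoint. With $\Lambda=\{0,1\}$ this forces \eqref{eq:sharp_supports} up to relabeling, and therefore $\xi_{T|M_T}=(1,0)$ on those supports.

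Next, the constraint \eqref{eq:pnc_direct_constraint} gives $\mu_\eta+\mu_{\tilde\eta}=(1,1)$. This yields \eqref{eq:eta_conf} with $\mu_\eta=(c,1-c)$, and $p(T|M_T,P_\eta)=\sum_i\xi_{T|M_T}(i)\mu_\eta(i)=c$ identifies $c$ as the confusability \eqref{eq:c_confusability}.

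I then evaluate the primitive probabilities from \eqref{eq:ppur_q} and \eqref{eq:pmix_q}. Setting $t=1$ gives $p_{\mathrm{pur}}=1$, so the square-root term vanishes. For the two labelings $p\in S_2$ I take $\mu_\eta^{(\mathrm{id})}=(c,1-c)$ and the swapped relabeling $\mu_\eta^{(\mathrm{sw})}=(1-c,c)$. Reading the relabeling as exchanging the two prior weights, i.e.\ the two ontic entries of the prior preparation, the swapped labeling equals $\mu_{\tilde\eta}$. The disagreement probability \eqref{eq:delta_def} is then $\delta=1-c$ for the identity labeling and $\delta=c$ for the swap. By \eqref{eq:pmix_delta}, $2p^{(p)}_{\mathrm{mix}}-1=1-2(1-q)\delta_p$. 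Substituting into \eqref{eq:Dop_recalled_direct} gives
\begin{equation*}
D_{\mathrm{op}}=\max\{1-2(1-q)(1-c),\,1-2(1-q)c\}=1-2(1-q)\min\{c,1-c\},
\end{equation*}
where the last step uses $1-q>0$. This proves \eqref{eq:direct_bound_full}, with equality in the model.

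For \eqref{eq:direct_bound_relabel}, I relabel the pair $P_\eta\leftrightarrow P_{\tilde\eta}$ if needed. This preserves \eqref{eq:key_op_equiv} and sends $c\mapsto 1-c$, so I may assume $c\le\tfrac12$.

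The main obstacle is interpretive rather than computational. I must fix precisely what the relabeling $p$ does ontologically: that the swapped prior preparation is represented by the swapped Bernoulli distribution, equivalently $\mu_{\tilde\eta}$. I also note a sign slip. With the identity labeling and $\mu_T=(1,0)$, the pass probability is $q+(1-q)c$, consistent with \eqref{eq:pmix_sharp}, which gives $1-2(1-q)(1-c)$. The minimum in the theorem then arises exactly from maximizing over the two labelings.

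I will handle this by stating the convention explicitly. I will also remark that if only one labeling were admissible, the score would be one of the two expressions, and each is bounded by the maximum, so the inequality \eqref{eq:direct_bound_full} holds regardless.
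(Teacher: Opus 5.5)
Your proposal is correct and follows essentially the same route as the paper's proof. Sharpness gives $\mu_T=(1,0)$ and $\mu_{\tilde T}=(0,1)$, preparation noncontextuality gives $\mu_\eta=(c,1-c)$, and then $p_{\mathrm{pur}}=1$, so maximizing over the two labelings of $p_{\mathrm{mix}}^{(\mathrm{id})}=q+(1-q)c$ and $p_{\mathrm{mix}}^{(\mathrm{swap})}=1-(1-q)c$ yields the bound with equality. The ``sign slip'' you flag is not actually present: the paper's identity-labeling value $2q-1+2(1-q)c$ is exactly your $1-2(1-q)(1-c)$, and the paper likewise obtains the minimum only after maximizing over the labelings.
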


\begin{proof}
By \eqref{eq:sharp_supports}, $\mu_T=(1,0)$ and $\mu_{\tilde T}=(0,1)$.
Preparation noncontextuality then implies \eqref{eq:eta_conf}.
Substituting these distributions into the SWAP-like response \eqref{eq:Xi_q} gives
\[
p_{\mathrm{pur}}=1,\qquad
p_{\mathrm{mix}}^{(\mathrm{id})}=q+(1-q)c.
\]
Therefore
\[
D_{\mathrm{op}}^{(\mathrm{id})}
=
2p_{\mathrm{mix}}^{(\mathrm{id})}-1
=
2q-1+2(1-q)c.
\]
For the swapped labeling one obtains
\[
p_{\mathrm{mix}}^{(\mathrm{swap})}=1-(1-q)c.
\]
Hence
\[
D_{\mathrm{op}}
=
2\max\{p_{\mathrm{mix}}^{(\mathrm{id})},p_{\mathrm{mix}}^{(\mathrm{swap})}\}-1
=
1-2(1-q)\min\{c,1-c\},
\]
which proves the claim.
\end{proof}

As a special case of the ideal classical similarity test, if $q=0$, so that identical ontic states always pass and different ontic states always fail, then
\begin{equation}
D_{\mathrm{op}}\le 1-2c
\qquad (c\le \tfrac12).
\label{eq:direct_bound_q0}
\end{equation}

Theorem~\ref{thm:direct_bound} gives a genuine nontrivial upper bound on the discriminability game.
However, in the natural quantum realization with a qubit sharp test and the standard SWAP measurement, this bound is tight and is saturated by quantum theory.

To see this, take
\[
\rho_T = |0\rangle\langle 0|,\qquad
\eta=
\begin{pmatrix}
c & 0\\
0 & 1-c
\end{pmatrix},
\]
and let
\[
M_T=\{|0\rangle\langle 0|,\ |1\rangle\langle 1|\}
\]
be the sharp single-copy test.
Then indeed
\[
p(T|M_T,P_\eta)=\langle 0|\eta|0\rangle = c.
\]

For the standard quantum SWAP test,
\[
p_{\mathrm{pass}}(\rho,\sigma)=\frac{1+\mathrm{Tr}(\rho\sigma)}{2}.
\]
Since $\rho_T$ is pure,
\[
p_{\mathrm{pur}}=p_{\mathrm{pass}}(\rho_T,\rho_T)=1.
\]
Moreover,
\[
p_{\mathrm{pass}}(|0\rangle\langle 0|,|1\rangle\langle 1|)=\frac12,
\]
so the natural quantum SWAP test corresponds to $q=\frac12$.

Using
\[
p_{\mathrm{mix}}^{(\mathrm{id})}=\frac{1+c}{2},
\qquad
p_{\mathrm{mix}}^{(\mathrm{swap})}=\frac{2-c}{2},
\]
one finds
\begin{equation}
D_{\mathrm{op}}^{\mathrm{QM}}=1-c,
\label{eq:QM_D_direct}
\end{equation}
which exactly equals the bound \eqref{eq:direct_bound_relabel} with $q=\tfrac12$. 

It is also useful to compare the direct noncontextual bound \eqref{eq:direct_bound_relabel} with the closed-form discriminability \eqref{eq:D_closed_form}. For any nontrivial two-state ensemble with \(0<|\gamma_{12}|^2<1\), one has \(D(\eta_1,\eta_2;|\gamma_{12}|^2)<1\). Hence, for any fixed \(c>0\), there always exists \(q<1\) such that
\[
1-2(1-q)c \ge D(\eta_1,\eta_2;|\gamma_{12}|^2).
\]
In particular, if one sets \(c=\eta_{\min}\), the threshold value is
\[
q_* = 1-\frac{1-D(\eta_1,\eta_2;|\gamma_{12}|^2)}{2\eta_{\min}}<1.
\]
Therefore, the direct bound does not lead to a contradiction with the discriminability defined in \cite{ghoreishi2019parametrization}, and no direct contextuality conclusion follows from this comparison.

It is important to emphasize that Theorem~\ref{thm:direct_bound} is derived within the minimal ontological model with \(n=2\). For larger ontic state spaces, the sharp single-copy test does not force \(P_T\) and \(P_{\tilde T}\) to be represented by point distributions, but only implies that their epistemic states have disjoint supports. Consequently, \(p_{\mathrm{pur}}\) need not equal \(1\), and the square-root term in \(D_{\mathrm{op}}\) can remain nonzero. Thus, Theorem~1 should be interpreted as a minimal-model bound rather than as a general preparation-noncontextual inequality.

Nevertheless, the qualitative message of the minimal analysis is plausibly robust: enlarging the ontic space gives a preparation-noncontextual model additional freedom in the choice of epistemic states and response functions. One should therefore not expect a direct contextuality violation to be restored simply by moving beyond \(n=2\). This supports the conclusion that the direct discriminability game is structurally limited as a standalone contextuality witness, and motivates the Bell-coupled construction of the next section.

The results of this section clarify the precise status of the discriminability game as a contextuality witness. 
On the one hand, Theorem~\ref{thm:direct_bound} shows that, once supplemented with a sharp single-copy test and a SWAP-like comparison measurement, the game admits a nontrivial preparation-noncontextual upper bound. 
Thus, the operational score $D_{\mathrm{op}}$ is not unconstrained: its achievable values are quantitatively limited by the confusability parameter $c$.

On the other hand, we show that in the natural qubit realization—namely, a sharp projective test together with the standard quantum SWAP comparison measurement—this bound is saturated by quantum theory and direct noncontextual bound does not lead to a contradiction with the discriminability defined in Eq.~\eqref{eq:D_closed_form}, and hence does not provide a contextuality witness for generic two-state ensembles.

It is important to note that this is not a limitation of a particular implementation but reflects a structural feature of the scenario. 
The noncontextual bound derived here relies on the existence of an ideal sharp single-copy test, which effectively trivializes the two-copy purity term and reduces the game to a classical similarity test. 
As a result, the natural quantum realization achieves the maximal value allowed by preparation noncontextuality.

This observation indicates that discriminability, while operationally well-defined and nontrivially constrained, is not by itself sufficient to reveal contextuality in the present prepare--measure setting. 
To obtain a genuine quantum advantage, one must couple discriminability to additional operational resources. 
In the next section, we show that Bell-type correlations provide precisely such a resource, leading to a nontrivial connection between distinguishability and nonclassical correlation strength.

\section{Discriminability as a resource for CHSH correlations}
\label{sec:CHSH_route}

In Sec.~\ref{sec:direct_bound}, we showed that the discriminability game admits a direct
preparation-noncontextual characterization once one supplements the scenario with a sharp single-copy test and
a SWAP-like comparison measurement.  However, it was also shown that in the natural
qubit realization this direct bound is saturated by quantum theory.  Thus, although the game is operationally
well-defined and nontrivially constrained, it does not by itself witness contextuality in that regime.

This naturally suggests a different question: rather than asking whether the discriminability game alone is
contextual, can discriminability instead act as a resource parameter controlling the strength of
nonclassical correlations?  In this section, we answer this question affirmatively by coupling the SWAP-game
statistics to a Bell-type correlation experiment.  The resulting inequality provides a state-dependent upper bound
on the achievable CHSH value in terms of operationally estimated separation parameters, which can in turn be
interpreted as discriminability measures for Bob's conditional preparations.

\subsection{Bipartite correlation experiment and conditional preparations}
\label{sec:bipartite_conditional}

Consider a bipartite experiment with two parties, Alice and Bob.
Alice chooses a setting $x\in\{0,1\}$ and obtains an outcome $a\in\{\pm1\}$, while Bob chooses a setting
$y\in\{0,1\}$ and obtains an outcome $b\in\{\pm1\}$.
The observed correlations are described by the conditional probabilities $p(a,b|x,y)$
from which one defines the correlators
\begin{equation}
E_{xy}:=\sum_{a,b=\pm1}ab\,p(a,b|x,y).
\label{eq:Exy_def}
\end{equation}
The corresponding CHSH expression is \cite{CHSH}
\begin{equation}
S_{\mathrm{CHSH}}:=E_{00}+E_{01}+E_{10}-E_{11}.
\label{eq:CHSH_def_rewrite}
\end{equation}

Operationally, conditioning on Alice's setting $x$ and outcome $a$ defines a subensemble on Bob.  We therefore treat Alice's event \((a,x)\) as specifying a conditional preparation on Bob, denoted \(P_{a|x}\), i.e., Bob's preparation conditioned on Alice's setting \(x\) and outcome \(a\).

This steering-style viewpoint is purely operational. It does not require any quantum formalism, only the
ability to postselect Bob's runs according to Alice's recorded outcome.
Therefore, for each fixed $x$, the pair
$\{P_{+|x},\,P_{-|x}\}$
constitutes a binary state-discrimination problem on Bob.  The main goal of this section is to show that the
distinguishability of these two conditional preparations constrains the attainable Bell correlations.
For this purpose, we first introduce operational quantities that characterize the distinguishability of Bob's conditional preparations using only experimentally accessible SWAP-type statistics. For each setting $x\in\{0,1\}$, consider the two conditional preparations $P_{+|x}$ and $P_{-|x}$, associated with states $\rho_{+|x}$ and $\rho_{-|x}$. From the SWAP comparison experiment, we define the purity and overlap probabilities
\begin{align}
p_{\mathrm{pur}}^{+|x} &= p(\textsf{pass}|\rho_{+|x},\rho_{+|x}),\\
p_{\mathrm{pur}}^{-|x} &= p(\textsf{pass}|\rho_{-|x},\rho_{-|x}),\\
p_{\mathrm{ov}}^{x} &= p(\textsf{pass}|\rho_{+|x},\rho_{-|x}).
\end{align}
The first and second quantities are  purity proxies for the conditional preparation's $P_{a|x}$, while the third is an overlap
proxy between the two conditional preparations at fixed $x$.
These quantities can be combined to define an operational measure of separation between the conditional preparations.
In the general case, where the preparation probabilities
$\pi_{a|x}:=p(a|x)$ may be biased; the relevant quantity is the weighted separation parameter
\begin{align}
\widetilde R_x^2
&:=
2\Big[
\pi_{+|x}^2(2p_{\mathrm{pur}}^{+|x}-1)
+
\pi_{-|x}^2(2p_{\mathrm{pur}}^{-|x}-1) \nonumber \\
&-
2\pi_{+|x}\pi_{-|x}(2p_{\mathrm{ov}}^{x}-1)
\Big]
-
(\pi_{+|x}-\pi_{-|x})^2.
\label{eq:weighted_Rx_main}
\end{align}

This quantity depends only on experimentally accessible probabilities and provides an operational measure of how well the two conditional preparations can be distinguished via comparison measurements.

In the symmetric case $p(+|x)=p(-|x)=\tfrac12$, Eq.~\eqref{eq:weighted_Rx_main} reduces to the simpler expression
\begin{equation}
R_x^2
=
p_{\mathrm{pur}}^{+|x}
+
p_{\mathrm{pur}}^{-|x}
-
2p_{\mathrm{ov}}^{x}
 ,
\label{eq:Rx_symmetric_main}
\end{equation}

\subsection{CHSH bound from operational discriminability}
\label{sec:CHSH_bound}

We now relate the operational separation parameters to the strength of nonclassical correlations.

\begin{theorem}
\label{thm:weighted_swap_CHSH_main}
Assume a qubit realization in which Bob's conditional preparations are represented by qubit states and Bob's binary measurements are projective. Let $\widetilde R_x$ be the weighted operational separation parameter defined in Eq.~\eqref{eq:weighted_Rx_main}. Then the CHSH value satisfies
\begin{equation}
S_{\mathrm{CHSH}}
\le
2\sqrt{\widetilde R_0^2+\widetilde R_1^2}.
\label{eq:weighted_CHSH_main}
\end{equation}
\end{theorem}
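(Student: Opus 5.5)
The plan is to translate every SWAP statistic into Bloch-vector language and then reduce CHSH to a bound on two vectors. The steps are: identify $\widetilde R_x$ with the norm of Bob's unnormalized steering vector, write the correlators as inner products, and finish with Cauchy--Schwarz and the parallelogram law.

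\emph{Step 1: identify $\widetilde R_x$.} In the qubit realization the SWAP test gives $2p(\textsf{pass}|\rho,\sigma)-1=\mathrm{Tr}(\rho\sigma)$, as used in Sec.~\ref{sec:direct_bound}. Define the Hermitian operator $V_x:=\pi_{+|x}\rho_{+|x}-\pi_{-|x}\rho_{-|x}$. Then the bracket in Eq.~\eqref{eq:weighted_Rx_main} is exactly $\mathrm{Tr}(V_x^2)$. Write $\rho_{a|x}=\tfrac12(I+\mathbf r_{a|x}\cdot\boldsymbol\sigma)$ and set $\mathbf m_x:=\pi_{+|x}\mathbf r_{+|x}-\pi_{-|x}\mathbf r_{-|x}$. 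This gives
\[
V_x=\tfrac12\big[(\pi_{+|x}-\pi_{-|x})I+\mathbf m_x\cdot\boldsymbol\sigma\big],
\qquad
\mathrm{Tr}(V_x^2)=\tfrac12\big[(\pi_{+|x}-\pi_{-|x})^2+|\mathbf m_x|^2\big].
\]
Hence $\widetilde R_x^2=2\mathrm{Tr}(V_x^2)-(\pi_{+|x}-\pi_{-|x})^2=|\mathbf m_x|^2$. So $\widetilde R_x$ is the length of Bob's unnormalized steering vector, which also shows that the right-hand side of Eq.~\eqref{eq:weighted_CHSH_main} is well defined.

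\emph{Step 2: express the correlators.} Since $p(a,b|x,y)=\pi_{a|x}\,p(b|y,\rho_{a|x})$, we have $E_{xy}=\mathrm{Tr}(B_y V_x)$, where $B_y$ is Bob's $\pm1$-valued observable. A nondegenerate projective $\pm1$ qubit measurement has $B_y=\mathbf b_y\cdot\boldsymbol\sigma$ with $|\mathbf b_y|=1$. This gives $E_{xy}=\mathbf b_y\cdot\mathbf m_x$, the identity part of $V_x$ dropping out because $\mathrm{Tr}\,\boldsymbol\sigma=0$.

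\emph{Step 3: bound CHSH.} Regrouping by $y$,
\[
S_{\mathrm{CHSH}}=\mathbf b_0\cdot(\mathbf m_0+\mathbf m_1)+\mathbf b_1\cdot(\mathbf m_0-\mathbf m_1)\le|\mathbf m_0+\mathbf m_1|+|\mathbf m_0-\mathbf m_1|.
\]
Applying $u+v\le\sqrt{2(u^2+v^2)}$ and the parallelogram law $|\mathbf m_0+\mathbf m_1|^2+|\mathbf m_0-\mathbf m_1|^2=2(|\mathbf m_0|^2+|\mathbf m_1|^2)$ yields $S_{\mathrm{CHSH}}\le 2\sqrt{|\mathbf m_0|^2+|\mathbf m_1|^2}$, which is Eq.~\eqref{eq:weighted_CHSH_main} by Step 1.

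\emph{Main obstacle.} The delicate point is the hypothesis on Bob's measurements. If some $B_y=\pm I$ (a trivial projective measurement), then $E_{xy}=\pm(\pi_{+|x}-\pi_{-|x})$, and this term is not controlled by $\mathbf m_x$. For example, with $\mathbf m_x=0$ but biased marginals the inequality would fail. I would therefore read ``projective binary measurement'' as rank-one, i.e.\ $B_y=\mathbf b_y\cdot\boldsymbol\sigma$, and state this explicitly.

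Alternatively, one could keep trivial measurements and assume unbiased marginals $\pi_{\pm|x}=\tfrac12$, since then the identity contribution vanishes. I would also remark in passing that the symmetric-case reduction to $R_x^2$ in Eq.~\eqref{eq:Rx_symmetric_main} follows from the same computation, with $R_x^2=\tfrac14|\mathbf r_{+|x}-\mathbf r_{-|x}|^2$.
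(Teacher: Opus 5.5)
Your proof is correct and follows the paper's argument: you write $E_{xy}=\bm b_y\cdot\bm r_x$ using the weighted steering vector, bound CHSH by Cauchy--Schwarz plus the parallelogram law, and identify $\|\bm r_x\|^2=\widetilde R_x^2$ through the SWAP identities. The only differences are cosmetic: you group by $y$ and apply the parallelogram law to $\bm m_0\pm\bm m_1$, where the paper groups by $x$ and uses $\|\bm b_0+\bm b_1\|^2+\|\bm b_0-\bm b_1\|^2=4$, and you get the identification compactly as $2\mathrm{Tr}(V_x^2)-(\pi_{+|x}-\pi_{-|x})^2$. Your caveat about trivial measurements $B_y=\pm I$ is valid and is exactly the restriction the paper imposes implicitly in the appendix by taking $B_y=\bm b_y\cdot\bm\sigma$ with $\|\bm b_y\|=1$.
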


A complete proof is provided in Appendix~\ref{app:weighted_swap_CHSH}. 
We emphasize that the content of Theorem~\ref{thm:weighted_swap_CHSH_main} is not the
Cauchy--Schwarz optimization alone, but the operational
identification of the relevant steering-vector norms with
two-copy comparison statistics. Thus, the bound can be
expressed in terms of experimentally accessible SWAP-type
probabilities rather than requiring full tomography of Bob's
conditional states.

To relate this bound directly to the discriminability quantity defined in Eq. \eqref{eq:D_closed_form}, we use the following lemmas.

\begin{lemma}[Geometric meaning of $\widetilde R_x$ for qubits]
\label{lem:Rtilde_geometry}
Assume Bob's conditional preparations admit a qubit realization
\[
\rho_{a|x}=\tfrac12\big(I+\bm{s}_{a|x}\cdot\bm{\sigma}\big),
\]
where $\bm{s}_{a|x}\in\mathbb{R}^3$ are Bloch vectors, and let
\[
\pi_{a|x}:=p(a|x), \qquad a\in\{\pm1\}.
\]
Define the weighted steering vector
\[
\bm{r}_x:=\pi_{+|x}\bm{s}_{+|x}-\pi_{-|x}\bm{s}_{-|x}.
\]
Then the weighted operational separation parameter satisfies
\begin{equation}
\widetilde R_x=\|\bm{r}_x\|.
\label{eq:Rtilde_Bloch}
\end{equation}
\end{lemma}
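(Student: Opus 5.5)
The plan is a direct computation in the Bloch representation. Everything follows from the quantum SWAP-test formula already used in Sec.~\ref{sec:direct_bound}, namely $p_{\mathrm{pass}}(\rho,\sigma)=\tfrac12\big(1+\mathrm{Tr}(\rho\sigma)\big)$. I will express each of the three comparison probabilities entering Eq.~\eqref{eq:weighted_Rx_main} in terms of the Bloch vectors $\bm{s}_{\pm|x}$. I will then recognise the resulting combination as a perfect square of the weighted steering vector, plus a scalar term that is exactly cancelled by the subtraction of $(\pi_{+|x}-\pi_{-|x})^2$.

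\textbf{Step 1: Hilbert--Schmidt overlaps.} For qubit states written as $\rho_{a|x}=\tfrac12(I+\bm{s}_{a|x}\cdot\bm{\sigma})$, the Pauli trace relations $\mathrm{Tr}\,\sigma_i=0$ and $\mathrm{Tr}(\sigma_i\sigma_j)=2\delta_{ij}$ give $\mathrm{Tr}(\rho_{a|x}\rho_{a'|x})=\tfrac12(1+\bm{s}_{a|x}\cdot\bm{s}_{a'|x})$.

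\textbf{Step 2: Comparison probabilities.} Combining Step 1 with the SWAP formula, the combinations appearing in Eq.~\eqref{eq:weighted_Rx_main} become
\begin{align}
2p_{\mathrm{pur}}^{\pm|x}-1&=\tfrac12\big(1+\|\bm{s}_{\pm|x}\|^2\big),\nonumber\\
2p_{\mathrm{ov}}^{x}-1&=\tfrac12\big(1+\bm{s}_{+|x}\cdot\bm{s}_{-|x}\big).\nonumber
\end{align}

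\textbf{Step 3: Assemble the bracket.} Substituting these into the bracket of Eq.~\eqref{eq:weighted_Rx_main} and multiplying by the prefactor $2$, the constant parts give $\pi_{+|x}^2+\pi_{-|x}^2-2\pi_{+|x}\pi_{-|x}=(\pi_{+|x}-\pi_{-|x})^2$. The vector parts give
\[
\pi_{+|x}^2\|\bm{s}_{+|x}\|^2+\pi_{-|x}^2\|\bm{s}_{-|x}\|^2-2\pi_{+|x}\pi_{-|x}\,\bm{s}_{+|x}\cdot\bm{s}_{-|x}=\|\bm{r}_x\|^2 .
\]
Subtracting $(\pi_{+|x}-\pi_{-|x})^2$ as prescribed in Eq.~\eqref{eq:weighted_Rx_main} leaves $\widetilde R_x^2=\|\bm{r}_x\|^2$.

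\textbf{Step 4: Take the square root.} This step also confirms a posteriori that the right-hand side of Eq.~\eqref{eq:weighted_Rx_main} is nonnegative, so that $\widetilde R_x$ is well defined. Taking $\widetilde R_x\ge0$ as the nonnegative root then gives Eq.~\eqref{eq:Rtilde_Bloch}.

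There is no real obstacle here; the only point needing care is the bookkeeping of factors of $2$. The overall factor $2$ in front of the bracket must exactly compensate the $\tfrac12$ coming from the SWAP formula together with the $\tfrac12$ in the Hilbert--Schmidt overlap. That is precisely why the definition of $\widetilde R_x^2$ subtracts $(\pi_{+|x}-\pi_{-|x})^2$ rather than some other constant.

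As a sanity check I will verify the symmetric case $\pi_{\pm|x}=\tfrac12$. There Eq.~\eqref{eq:Rx_symmetric_main} reads $R_x^2=\tfrac14\|\bm{s}_{+|x}-\bm{s}_{-|x}\|^2$, consistent with $\bm{r}_x=\tfrac12(\bm{s}_{+|x}-\bm{s}_{-|x})$.
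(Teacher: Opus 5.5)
Your computation is correct and is essentially the paper's argument (Appendix, Step 4), run in the opposite direction. The paper expands $\|\bm r_x\|^2$, rewrites $\|\bm s_{\pm|x}\|^2$ and $\bm s_{+|x}\cdot\bm s_{-|x}$ using $\mathrm{Tr}(\rho\rho')=\tfrac12(1+\bm s\cdot\bm s')$, and then substitutes the SWAP identities $2p-1=\mathrm{Tr}(\rho\rho')$, whereas you substitute into $\widetilde R_x^2$ and simplify; one small correction to your closing remark is that the combination $2p-1$ already absorbs the $\tfrac12$ of the SWAP formula, so the overall prefactor $2$ only has to cancel the Hilbert--Schmidt $\tfrac12$ (as your Step~2 formulas in fact show).
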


This lemma identifies the operational quantity $\widetilde R_x$ with the norm of the weighted steering vector, thereby giving it a direct geometric meaning in the Bloch representation. In the symmetric case $\pi_{+|x}=\pi_{-|x}=\tfrac12$, this further coincides with the trace norm of the weighted state difference. The proof of this lemma is also included in the appendix \ref{app:weighted_swap_CHSH} (See step 4).

\begin{lemma}
\label{lem:R_vs_D}
Suppose that, for a fixed setting \(x\), the conditional preparations
\(P_{+|x}\) and \(P_{-|x}\) are pure qubit states, so that the fidelity-based
discriminability \(D_x\) of Eq.~\eqref{eq:D_closed_form} is directly applicable to the conditional
ensemble
\[
\{(\pi_{+|x},P_{+|x}),(\pi_{-|x},P_{-|x})\}.
\]
Then the weighted operational separation parameter satisfies
\begin{equation}
\widetilde R_x \leq 2D_x-1 .
\label{eq:Rtilde_vs_D}
\end{equation}
\end{lemma}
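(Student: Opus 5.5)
The plan is to reduce both sides of \eqref{eq:Rtilde_vs_D} to explicit functions of the priors and the overlap, and then compare them by a convexity argument. Write $p:=\pi_{+|x}$, $1-p=\pi_{-|x}$, $m:=\min\{p,1-p\}$, $k:=4p(1-p)\in[0,1]$ and $g:=|\gamma_{12}|^2=|\langle\psi_{+|x}|\psi_{-|x}\rangle|^2$.

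\emph{Step 1: closed form for $\widetilde R_x$.} By Lemma~\ref{lem:Rtilde_geometry}, $\widetilde R_x=\|\bm r_x\|$ with $\bm r_x=p\,\bm s_{+|x}-(1-p)\bm s_{-|x}$. For pure states the Bloch vectors are unit vectors, and $\bm s_{+|x}\cdot\bm s_{-|x}=2g-1$. Expanding the norm and using $p^2+(1-p)^2+2p(1-p)=1$ gives
\[
\widetilde R_x^2=1-4p(1-p)\,g=1-kg .
\]

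\emph{Step 2: closed form for $2D_x-1$.} Substitute $\eta_1=p$ and $\eta_2=1-p$ into \eqref{eq:D_closed_form}, and use $\eta_1^2+\eta_2^2=1-2\eta_1\eta_2$. This yields
\[
2D_x-1=1-k+k\sqrt{1-g}+2g\big(p(1-p)-m^2\big).
\]
As a check, for equal priors both sides reduce to $\sqrt{1-g}$, so the bound is then an equality.

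\emph{Step 3: the inequality.} This is the main step. I change variables to $u:=\sqrt{1-g}\in[0,1]$, so that $g=1-u^2$, and set $\beta:=p(1-p)-m^2\ge 0$. The claim becomes
\[
L(u):=\sqrt{1-k+ku^2}\;\le\; R(u):=1-k+ku+2\beta(1-u^2).
\]
Two facts give this for all $u\in[0,1]$.
\begin{itemize}
\item $L$ is convex in $u$, being the Euclidean norm of the vector $(\sqrt{1-k},\sqrt{k}\,u)$. Hence $L$ lies below its chord on $[0,1]$.
\item $R$ is concave in $u$, since $\beta\ge0$. Hence $R$ lies above its chord on $[0,1]$.
\end{itemize}
It therefore suffices to check that $L$ and $R$ agree at the two endpoints.
\begin{itemize}
\item At $u=1$ (orthogonal states, $g=0$): $L(1)=R(1)=1$.
\item At $u=0$ (identical states, $g=1$): $L(0)=\sqrt{1-k}=|1-2p|$, and $R(0)=1-k+2p(1-p)-2m^2$. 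Taking $m=p\le\tfrac12$ without loss of generality, $R(0)=1-2p$, so again $L(0)=R(0)$.
\end{itemize}
The two chords therefore coincide, and $L\le\text{chord}\le R$ on $[0,1]$. This proves $\widetilde R_x\le 2D_x-1$.

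I expect the main obstacle to be this last comparison. Both sides are concave in the original variable $g$, so no direct convexity argument works there. The reparametrization by $u=\sqrt{1-g}$ is what makes one side convex and the other concave. If that failed, the fallback would be to square both sides, which is legitimate because $R\ge0$ on $[0,1]$, and check the sign of the resulting quartic polynomial in $u$.
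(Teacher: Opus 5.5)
Your proof is correct, and it takes a different route from the paper. The paper argues through minimum-error discrimination: it imports from Ref.~\cite{ghoreishi2019parametrization} the inequality $p_{\mathrm{guess},x}\le D_x$, writes the Helstrom bias as $2p_{\mathrm{guess},x}-1=\|\pi_{+|x}\rho_{+|x}-\pi_{-|x}\rho_{-|x}\|_1$, and notes that this trace norm, $\max\{|\pi_{+|x}-\pi_{-|x}|,\|\bm r_x\|\}$, is at least $\|\bm r_x\|=\widetilde R_x$. You never invoke Helstrom. Instead you compute both sides explicitly, $\widetilde R_x=\sqrt{1-kg}$ and $2D_x-1=1-k+k\sqrt{1-g}+2g\beta$, and compare them by the convex/concave chord argument in $u=\sqrt{1-g}$. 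I checked the endpoint values and the sign $\beta=p(1-2p)\ge 0$ for $p\le\tfrac12$. I also checked the reduction to $p\le\tfrac12$: both sides are invariant under $p\leftrightarrow 1-p$, the closed form through $\eta_{\min}$. All of these are fine. For pure states $\|\bm r_x\|^2=1-kg\ge(1-2p)^2$, so the paper's trace-norm inequality is actually an equality and $\widetilde R_x=2p_{\mathrm{guess},x}-1$. Your Step~3 is therefore a self-contained proof of the cited inequality $p_{\mathrm{guess}}\le D$ for two pure states, in the form of Eq.~\eqref{eq:D_closed_form}. The paper's route is shorter and gives $\widetilde R_x$ a clear operational meaning as a guessing bias, but it rests entirely on the external result. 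Yours uses only the closed form and Lemma~\ref{lem:Rtilde_geometry}. It also makes the equality cases transparent: equal (or degenerate) priors, or $g\in\{0,1\}$; otherwise $R$ is strictly concave and the inequality is strict.
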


\begin{proof}
For pure conditional qubit states, the discriminability \(D_x\) defined in Eq.~\ref{eq:D_closed_form}
is applicable to the two-state conditional ensemble. Using the result of
Ref.~\cite{ghoreishi2019parametrization} that the corresponding Helstrom guessing probability is upper bounded
by this discriminability, one has
\[
p_{\mathrm{guess},x}\leq D_x .
\]
On the other hand, for the conditional pair at setting $x$, the unequal-prior Helstrom formula gives
\[
p_{\mathrm{guess},x}
=
\frac12\left(1+\big\|\pi_{+|x}\rho_{+|x}-\pi_{-|x}\rho_{-|x}\big\|_1\right).
\]
For qubits, the operator
\[
\pi_{+|x}\rho_{+|x}-\pi_{-|x}\rho_{-|x}
=
\frac12\Big((\pi_{+|x}-\pi_{-|x})I+\bm r_x\cdot\bm\sigma\Big)
\]
has trace norm at least $\|\bm r_x\|$. Using Lemma~\ref{lem:Rtilde_geometry}, $\|\bm r_x\|=\widetilde R_x$, and therefore
\[
2p_{\mathrm{guess},x}-1\ge \widetilde R_x.
\]
Combining this with $p_{\mathrm{guess},x}\le D_x$ yields
\[
\widetilde R_x\le 2D_x-1,
\]
which proves the claim.
\end{proof}
By substituting the bound $\widetilde R_x \le 2D_x - 1$ from Lemma~\ref{lem:R_vs_D} into Eq.~\eqref{eq:weighted_CHSH_main}, the following bound is obtained immediately.
\begin{corollary}[CHSH bound in terms of discriminability]
\label{cor:CHSH_discriminability}
Under the assumptions of Theorem~\ref{thm:weighted_swap_CHSH_main}, and for the case where conditional preparations $(P_{+|x},P_{-|x})$ are represented by pure qubit states, the CHSH value satisfies
\begin{equation}
S_{\mathrm{CHSH}}
\le
2\sqrt{(2D_0-1)^2+(2D_1-1)^2}.
\label{eq:CHSH_D_bound}
\end{equation}
\end{corollary}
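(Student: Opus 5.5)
The plan is to obtain the corollary by composing Theorem~\ref{thm:weighted_swap_CHSH_main} with Lemma~\ref{lem:R_vs_D}, applied separately to each setting $x\in\{0,1\}$. The hypotheses match: the corollary keeps the qubit realization and projective measurements of Theorem~\ref{thm:weighted_swap_CHSH_main}, and adds purity of $\rho_{\pm|x}$. That extra assumption is exactly what Lemma~\ref{lem:R_vs_D} needs for $D_x$ in Eq.~\eqref{eq:D_closed_form} to be defined for the conditional ensemble $\{(\pi_{+|x},P_{+|x}),(\pi_{-|x},P_{-|x})\}$.

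The first step is to invoke Theorem~\ref{thm:weighted_swap_CHSH_main}, which gives
\[
S_{\mathrm{CHSH}}\le 2\sqrt{\widetilde R_0^2+\widetilde R_1^2}.
\]
The second step is to invoke Lemma~\ref{lem:R_vs_D} for each $x$, which gives $0\le \widetilde R_x\le 2D_x-1$.

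The third step, which I expect to be the only point needing care, is passing from this to $\widetilde R_x^2\le (2D_x-1)^2$. Squaring preserves the inequality only if both sides are nonnegative. The left side is nonnegative by Lemma~\ref{lem:Rtilde_geometry}, since $\widetilde R_x=\|\bm r_x\|$ is a norm. For the right side I need $D_x\ge \tfrac12$.

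I would check $D_x\ge\tfrac12$ in either of two ways. Directly from Eq.~\eqref{eq:D_closed_form}: the terms $2\eta_1\eta_2\sqrt{1-|\gamma|^2}$ and $|\gamma|^2(\eta_1\eta_2-\eta_{\min}^2)$ are both nonnegative, because $\eta_1\eta_2\ge\eta_{\min}^2$. Hence $D_x\ge \eta_1^2+\eta_2^2\ge \tfrac12$. Alternatively, the chain in the proof of Lemma~\ref{lem:R_vs_D} gives $D_x\ge p_{\mathrm{guess},x}\ge \max(\pi_{+|x},\pi_{-|x})\ge\tfrac12$. Either route makes $2D_x-1\ge0$, so the squaring is legitimate.

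Finally, the map $(u,v)\mapsto 2\sqrt{u+v}$ is monotone in each argument. Summing the two squared inequalities and substituting into the bound from Theorem~\ref{thm:weighted_swap_CHSH_main} yields Eq.~\eqref{eq:CHSH_D_bound}.
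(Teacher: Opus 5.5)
Your proposal is correct and follows the paper's route: the paper also obtains the corollary by substituting $\widetilde R_x\le 2D_x-1$ from Lemma~\ref{lem:R_vs_D} into the bound of Theorem~\ref{thm:weighted_swap_CHSH_main}. Your care about squaring is sound, but the separate check $D_x\ge\tfrac12$ is unnecessary, because $0\le\widetilde R_x\le 2D_x-1$ already forces $2D_x-1\ge 0$ and hence $\widetilde R_x^2\le(2D_x-1)^2$.
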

The discriminability form of the bound should therefore be understood as a specialization of the more general operational bound in Eq.~\eqref{eq:weighted_CHSH_main}. The latter
depends only on the directly measurable separation parameters
\(\widetilde R_x\) and remains valid for arbitrary qubit conditional
preparations. By contrast, the replacement of \(\widetilde R_x\) by
\(2D_x-1\) relies on applying the fidelity-based discriminability of Eq.~\eqref{eq:D_closed_form}
to the conditional two-state ensembles, and is therefore stated here for pure conditional preparations.

Theorem~\ref{thm:weighted_swap_CHSH_main} shows that the strength of CHSH correlations is constrained by
the operational separation of Bob's conditional preparations. In the pure-state
case, Corollary~\ref{cor:CHSH_discriminability} expresses this constraint directly in terms of the
fidelity-based discriminability \(D_x\).
Larger discriminability implies larger separation parameters $\widetilde R_x$, and hence allows for stronger nonclassical correlations.

Equation~\eqref{eq:weighted_CHSH_main} is nontrivial. In particular, if $\widetilde R_0=\widetilde R_1=1$, one recovers the Tsirelson bound $S_{\mathrm{CHSH}}\le 2\sqrt{2}$. More generally, if
$\widetilde R_0^2+\widetilde R_1^2<1,$
then $S_{\mathrm{CHSH}}<2$, so no CHSH violation is possible for any choice of Bob's measurements. This shows that insufficient discriminability of the conditional preparations precludes the appearance of nonclassical correlations.

In the symmetric case, where the two settings yield equal discriminability, $D_0=D_1=:D$, Eq.~\eqref{eq:CHSH_D_bound} reduces to
\begin{equation}
S_{\mathrm{CHSH}}\le 2\sqrt{2}\,(2D-1).
\label{eq:CHSH_D_sym}
\end{equation}
In particular, a necessary condition for CHSH violation is
\begin{equation}
D>\tfrac12\left(1+\tfrac{1}{\sqrt2}\right)\approx 0.8536.
\label{eq:D_threshold_rewrite}
\end{equation}
The results of this section provide a direct bridge between Bell nonlocality and two-state discriminability.
For each setting $x$, the quantity $D_x$ measures how well Bob could, in principle, distinguish the two
conditional preparations associated with Alice's outcomes.
Equation~\eqref{eq:CHSH_D_bound} states that the strength of Bell correlations is limited by the
distinguishability of these conditional states.
Equivalently, if Bob's conditional preparations are insufficiently distinguishable, then strong CHSH violations
are impossible.




\subsection{Conceptual significance}
\label{sec:CHSH_significance}

The results of this section suggest the following conceptual picture.

A Bell experiment naturally induces, for each Alice setting $x$, a binary discrimination problem on Bob:
Alice's two outcomes define two conditional preparations $P_{+|x}$ and $P_{-|x}$.
The distinguishability of these preparations quantifies how much information about Alice's outcome is available on
Bob's side.
Our SWAP-game construction provides a direct operational method to estimate this distinguishability using only
two-copy comparison measurements.
Theorem~\ref{thm:weighted_swap_CHSH_main} and Corollary~\ref{cor:CHSH_discriminability} then show that this operationally accessible
distinguishability limits the strength of Bell correlations.

Thus, while the direct discriminability game of Sec.~\ref{sec:direct_bound} does not itself witness contextuality
in the natural qubit realization, the Bell-coupled scenario reveals a different and stronger role for
discriminability by becoming a quantitative resource controlling nonclassical correlation strength.
This is the sense in which discriminability acquires a contextual (or Bell-nonclassical) significance in the
present framework.

\section{Conclusion}
\label{sec:conclusion}

We have studied discriminability from an operational and contextuality-oriented viewpoint. 
Starting from the fidelity-based discriminability introduced in Ref.~\cite{ghoreishi2019parametrization}, we showed that in the qubit 
realization it admits a natural operational interpretation through a two-copy comparison game based on 
SWAP-type measurements. This leads to an experimentally accessible score $D_{\mathrm{op}}$ that quantifies 
state distinguishability without relying on a minimum-error discrimination task.

We first analyzed whether this operational discriminability game can directly witness preparation contextuality. 
Within a preparation-noncontextual ontological model, we derived a direct upper bound on the game score under a 
SWAP-like comparison rule and a sharp single-copy test, and showed that this bound is saturated by the natural 
qubit realization. Thus, the direct discriminability game provides a meaningful structural characterization but 
does not by itself yield a contextuality witness in this regime.

We then showed that discriminability acquires a stronger nonclassical significance when coupled to Bell-type 
correlations. By applying two-copy comparison measurements to Bob's conditional preparations, we obtained a 
state-dependent upper bound on the CHSH value in terms of operational separation parameters, and hence in terms 
of the distinguishability of the conditional states. This establishes a quantitative link between operational 
discriminability and the strength of nonclassical correlations.

These results suggest that discriminability, although not itself a direct contextuality witness in the natural 
prepare--measure realization, functions as an operational resource that constrains Bell-type correlation 
behavior. 

\section*{Acknowledgements}
We acknowledge the Štefan Schwarz Support Fund, the project VEGA 2/0164/25 (QUAS), and the QENTAPP 09103-03-V04-00777 project.

\appendix

\section{Proof of Theorem~\ref{thm:weighted_swap_CHSH_main}}
\label{app:weighted_swap_CHSH}

In this appendix we derive the weighted CHSH bound stated in Theorem~\ref{thm:weighted_swap_CHSH_main}.

\subsection{Conditional states and Bob's observables}

Assume that Bob's conditional preparations admit a qubit realization
\begin{equation}
\rho_{a|x}
=
\frac12\left(I+\bm{s}_{a|x}\cdot\bm{\sigma}\right),
\qquad
a\in\{\pm1\},\ x\in\{0,1\},
\label{eq:rho_ax_weighted}
\end{equation}
where $\bm{s}_{a|x}\in\mathbb{R}^3$ are Bloch vectors with $\|\bm{s}_{a|x}\|\le 1$.
Let
\[
\pi_{a|x}:=p(a|x)
\]
denote Alice's conditional-outcome probabilities.

Bob's binary projective measurement for setting $y\in\{0,1\}$ is written as
\begin{equation}
B_y=\bm{b}_y\cdot\bm{\sigma},
\qquad \|\bm{b}_y\|=1.
\label{eq:By_weighted}
\end{equation}

For a qubit state $\rho=\tfrac12(I+\bm{s}\cdot\bm{\sigma})$, one has
\begin{equation}
\mathrm{Tr}(\rho B_y)=\bm{s}\cdot\bm{b}_y.
\label{eq:exp_B_weighted}
\end{equation}

\subsection{Correlators in terms of weighted steering vectors}

The correlators are
\begin{equation}
E_{xy}=\sum_{a,b=\pm1}ab\,p(a,b|x,y).
\label{eq:Exy_weighted}
\end{equation}
Using $p(a,b|x,y)=p(a|x)\,p(b|a,x,y)$ gives
\begin{align}
E_{xy}
&=
\sum_{a=\pm1} a\,p(a|x)\sum_{b=\pm1} b\,p(b|a,x,y)
\nonumber\\
&=
\sum_{a=\pm1} a\,\pi_{a|x}\,\mathrm{Tr}(\rho_{a|x}B_y).
\label{eq:Exy_step_weighted}
\end{align}
By Eq.~\eqref{eq:exp_B_weighted},
\begin{align}
E_{xy}
&=
\pi_{+|x}\,\bm{s}_{+|x}\cdot\bm{b}_y
-
\pi_{-|x}\,\bm{s}_{-|x}\cdot\bm{b}_y
\nonumber\\
&=
\bm{r}_x\cdot\bm{b}_y,
\label{eq:Exy_rx_weighted}
\end{align}
where we define the weighted steering vector
\begin{equation}
\bm{r}_x:=\pi_{+|x}\bm{s}_{+|x}-\pi_{-|x}\bm{s}_{-|x}.
\label{eq:rx_weighted}
\end{equation}
\subsection{Bounding the CHSH value}
The CHSH value is
\begin{equation}
S_{\mathrm{CHSH}}=E_{00}+E_{01}+E_{10}-E_{11}.
\label{eq:S_weighted}
\end{equation}
Using Eq.~\eqref{eq:Exy_rx_weighted},
\begin{align}
S_{\mathrm{CHSH}}
&=
\bm{r}_0\cdot\bm{b}_0+\bm{r}_0\cdot\bm{b}_1+\bm{r}_1\cdot\bm{b}_0-\bm{r}_1\cdot\bm{b}_1
\nonumber\\
&=
\bm{r}_0\cdot(\bm{b}_0+\bm{b}_1)+\bm{r}_1\cdot(\bm{b}_0-\bm{b}_1).
\label{eq:S_expand_weighted}
\end{align}
Define
\[
\bm{u}:=\bm{b}_0+\bm{b}_1,
\qquad
\bm{v}:=\bm{b}_0-\bm{b}_1.
\]
Then
\[
S_{\mathrm{CHSH}}=\bm{r}_0\cdot\bm{u}+\bm{r}_1\cdot\bm{v}.
\]
Applying the Cauchy--Schwarz inequality gives the following.
\begin{equation}
|\bm{r}_0\cdot\bm{u}+\bm{r}_1\cdot\bm{v}|
\le
\sqrt{\|\bm{r}_0\|^2+\|\bm{r}_1\|^2}\,
\sqrt{\|\bm{u}\|^2+\|\bm{v}\|^2}.
\label{eq:CS_weighted}
\end{equation}
Since $\|\bm{b}_0\|=\|\bm{b}_1\|=1$,
\begin{align}
\|\bm{u}\|^2+\|\bm{v}\|^2
&=
\|\bm{b}_0+\bm{b}_1\|^2+\|\bm{b}_0-\bm{b}_1\|^2
\nonumber\\
&=4.
\end{align}
Therefore
\begin{equation}
S_{\mathrm{CHSH}}
\le
2\sqrt{\|\bm{r}_0\|^2+\|\bm{r}_1\|^2}.
\label{eq:S_weighted_r}
\end{equation}

Thus it remains to identify $\|\bm{r}_x\|^2$ with the weighted operational quantity
$\widetilde R_x^2$.

\subsection{Relating \texorpdfstring{$\widetilde R_x$}{Rtilde x} to the weighted steering-vector norm}

Starting from Eq.~\eqref{eq:rx_weighted},
\begin{align}
\|\bm{r}_x\|^2
&=
\pi_{+|x}^2\|\bm{s}_{+|x}\|^2
+
\pi_{-|x}^2\|\bm{s}_{-|x}\|^2
-
2\pi_{+|x}\pi_{-|x}\,\bm{s}_{+|x}\cdot\bm{s}_{-|x}.
\label{eq:rnorm_expand}
\end{align}
For qubits,
\begin{equation}
\mathrm{Tr}(\rho^2)=\frac12(1+\|\bm{s}\|^2),
\qquad
\mathrm{Tr}(\rho\rho')=\frac12(1+\bm{s}\cdot\bm{s}').
\label{eq:qubit_trace_weighted}
\end{equation}
Hence
\begin{align}
\|\bm{s}_{+|x}\|^2 &= 2\mathrm{Tr}(\rho_{+|x}^2)-1,
\\
\|\bm{s}_{-|x}\|^2 &= 2\mathrm{Tr}(\rho_{-|x}^2)-1,
\\
\bm{s}_{+|x}\cdot\bm{s}_{-|x}
&=
2\mathrm{Tr}(\rho_{+|x}\rho_{-|x})-1.
\end{align}
Substituting into Eq.~\eqref{eq:rnorm_expand},
\begin{align}
\|\bm{r}_x\|^2
&=
\pi_{+|x}^2\big(2\mathrm{Tr}(\rho_{+|x}^2)-1\big)
+
\pi_{-|x}^2\big(2\mathrm{Tr}(\rho_{-|x}^2)-1\big)
\nonumber\\
&-
2\pi_{+|x}\pi_{-|x}\big(2\mathrm{Tr}(\rho_{+|x}\rho_{-|x})-1\big)
\nonumber\\
&=
2\Big[
\pi_{+|x}^2\mathrm{Tr}(\rho_{+|x}^2)
+
\pi_{-|x}^2\mathrm{Tr}(\rho_{-|x}^2)-
2\pi_{+|x}\pi_{-|x}\mathrm{Tr}(\rho_{+|x}\rho_{-|x})
\Big]
\nonumber\\
&\qquad
-
(\pi_{+|x}^2+\pi_{-|x}^2-2\pi_{+|x}\pi_{-|x})
\nonumber\\
&=
2\Big[
\pi_{+|x}^2\mathrm{Tr}(\rho_{+|x}^2)
+
\pi_{-|x}^2\mathrm{Tr}(\rho_{-|x}^2)\nonumber \\
&-
2\pi_{+|x}\pi_{-|x}\mathrm{Tr}(\rho_{+|x}\rho_{-|x})
\Big]
\nonumber\\
&\qquad
-
(\pi_{+|x}-\pi_{-|x})^2.
\label{eq:rnorm_trace}
\end{align}
Now, use the SWAP-test identities
\begin{equation}
  2p_{\mathrm{pur}}^{+|x}-1 = \mathrm{Tr}(\rho_{+|x}^2), \quad     2p_{\mathrm{pur}}^{-|x}-1 = \mathrm{Tr}(\rho_{-|x}^2),  \nonumber
\end{equation}
\begin{equation}
   2p_{\mathrm{ov}}^{x}-1 = \mathrm{Tr}(\rho_{+|x}\rho_{-|x}).
\end{equation}
Substituting into Eq.~\eqref{eq:rnorm_trace}, we obtain
\begin{align}
\|\bm{r}_x\|^2
&=
2\Big[
\pi_{+|x}^2(2p_{\mathrm{pur}}^{+|x}-1)
+
\pi_{-|x}^2(2p_{\mathrm{pur}}^{-|x}-1)
\\
&-
2\pi_{+|x}\pi_{-|x}(2p_{\mathrm{ov}}^{x}-1)
\Big]
-
(\pi_{+|x}-\pi_{-|x})^2. \nonumber 
\end{align}
This is exactly Eq.~\eqref{eq:weighted_Rx_main}, so
\[
\|\bm{r}_x\|^2=\widetilde R_x^2.
\]

Finally, substituting this into Eq.~\eqref{eq:S_weighted_r}, we arrive at
\[
S_{\mathrm{CHSH}}
\le
2\sqrt{\widetilde R_0^2+\widetilde R_1^2},
\]
which proves Theorem~\ref{thm:weighted_swap_CHSH_main}.

\qed

\bibliography{References}
\end{document}